\documentclass[11pt]{article}
\usepackage{KLMM/klmm}

\title{A Finite Certificate for the Positive \(n=9\) Vasc Inequality}
\author{Dakai~Guo\thanks{State Key Laboratory of Mathematical Sciences, Academy of Mathematics and Systems Science, Chinese Academy of Sciences, and the School of Mathematics, University of Chinese Academy of Sciences, Beijing, People’s Republic of China. This work was supported by the Strategic Priority Research Program of Chinese Academy of Sciences under Grant XDA0480502 and XDA0480503. }
    \and Ruichen Qiu\samethanks
    \and Yichuan Cao\samethanks
    \and Ruyong Feng\samethanks
}
\date{\today}

\begin{document}
\maketitle

\begin{abstract}
We prove the positive-real \(n=9\) case of the Vasc cyclic inequality.  The
proof was obtained with human-guided assistance from the AI agent \textit{MechMath Agent Team}:
the human-readable part reduces the rational inequality to a homogeneous
polynomial inequality, fixes a cyclic maximum, and parametrizes each sorted
fixed-maximum cone by cumulative gaps; the finite part is a certificate
covering all \(8!=40320\) sorted cones.  \textit{MechMath Agent Team} generated the certificate
verification workflow through Python tool calls, including the case split,
verification programs, and terminal classifications.  The published certificate
has \(36815\) coefficient leaves, \(2236\) ordinary Polya multiplier leaves,
and \(1269\) AM-GM midpoint overlay leaves.  Human authors audited the
mathematical reductions and verification logic, and a separate artifact
contains the certificate, an independent verifier, and a from-source rebuild
route.
\end{abstract}

\section{Setup}
\label{sec:background-statement}

This first part fixes the convention for the cyclic expression, situates the positive-real \(n=9\) case, states the theorem, and outlines the proof route.

\subsection{Convention}
\label{subsec:convention}

For \(n\ge3\), the Vasc cyclic expression is
\[
C_n(x_1,\ldots,x_n)=
\sum_{i=1}^{n}
\frac{x_i-x_{i+1}}{x_{i+1}+x_{i+2}},
\]
with cyclic indexing. In the convention used in this paper, Vasc's conjecture asks for \(C_n(x)\ge0\) on the positive orthant. 
Closely related sources and discussions sometimes use the opposite sign and reversed cyclic order; the two forms are equivalent by reversing the variables and multiplying by \(-1\).
More explicitly, the public discussion \citep{MSEVasc} uses the expression
\[
F_n(u_1,\ldots,u_n)=
\sum_{i=1}^{n}\frac{u_{i+1}-u_{i+2}}{u_i+u_{i+1}}
\]
and asks for \(F_n(u)\le0\).
With cyclic indexing,
\[
C_n(x_1,\ldots,x_n)=-F_n(x_n,x_{n-1},\ldots,x_1).
\]
Consequently, under variable reversal, a proof of $F_n \le 0$ translates directly into a proof of $C_n \ge 0$, and a counterexample of the former immediately yields a counterexample of the latter.
The cleared polynomials in the two conventions are therefore not literally the same polynomial, but they are equivalent under this reversal and sign change, with multiplication only by positive denominator products.

\subsection{Related Work}
\label{subsec:related-work}

The Vasc family belongs to a broader research program on automated inequality proving, which has been extensively developed by Xia, Yang, and their collaborators.
A public discussion~\citep{YangXuezhi2009Inequalities, YangXuezhi2010TwentyTwo, MSEYangConj13} previously highlighted an issue regarding the thirteenth conjecture in Xuezhi Yang's list of twenty-two inequality conjectures.
Specifically, \citet{XiaYang} investigated this conjecture within the framework of successive difference substitution (SDS), building upon Yang's original difference-substitution method~\citep{Yang2006SDS}.
The SDS method establishes polynomial nonnegativity by iteratively applying structured substitutions until all remaining forms possess nonnegative coefficients.
The theoretical connection between Polya-type positivity certificates and SDS is detailed in~\citep{XuYaoSDS}.
However, within the Vasc family, the status of low-dimensional cases reported in the previous literature remains mixed. 
To clarify this, Table~\ref{tab:vasc-status} systematically categorizes these findings into three components: the documented base cases, the propagation principles used to extend counterexamples, and the cases addressed in this study.

\begin{table}[t]
    \centering
    \begin{tabular*}{\linewidth}{@{\hspace{5pt}\extracolsep{\fill}} l l p{0.65\linewidth} @{\hspace{5pt}}}
    \toprule
    \(n\) & Status & Source or method \\
    \midrule
    \(3,4,5,7\) & proved & SDS, as reported in \citep{XiaYang,MSEVasc} \\
    \(6\) & disproved & counterexample in \citep{XiaYang,MSEVasc} \\
    \(13\) & disproved & explicit boundary counterexample reported in \citep{MSEVascPropagation} \\
    \midrule
    \(n\mapsto n+2\) & - & \(F_{n+2}(u_1,\ldots,u_n,u_1,u_n)=F_n(u_1,\ldots,u_n)\), as recorded in \citep{MSEVascPropagation}\\
    \(2m,\ m\ge3\) & disproved & derived from the \(n=6\) counterexample \\
    \(2m+1,\ m\ge6\) & disproved & derived from the \(n=13\) counterexample \\
    \midrule
    \(9\) & \textbf{proved here} & \textbf{finite exact certificate with independent verification} \\
    \(11\) & open & partial positive result on one small cone~\citep{ZengEtAlCM2026Vasc}; full case not settled \\
    \bottomrule
    \end{tabular*}
    \caption{Status of Vasc inequalities. The first block lists the base cases recorded in earlier sources, and the second block lists the propagation principle used to extend counterexamples. The third block lists the cases addressed by this paper and the remaining open case. For consistency, all entries have been converted to the $C_n \ge 0$ convention.}
    \label{tab:vasc-status}
\end{table}

As delineated in the second block of Table~\ref{tab:vasc-status}, the propagation identity $n \mapsto n+2$ carries severe global implications. 
Specifically, the $n=6$ counterexample automatically rules out the validity of the inequality for all even dimensions $2m$ ($m \ge 3$), while the $n=13$ counterexample similarly disproves all odd dimensions $2m+1$ ($m \ge 6$). 
Although the counterexamples originally reported in the public discussion involve certain zero coordinates, small positive perturbations yield valid counterexamples within the strictly positive domain, as the denominators remain nonzero and the underlying expressions are continuous there~\citep{MSEVascPropagation}.


Accounting for these propagated invalid ranges, the only remaining open odd dimensions are $n=9$ and $n=11$. 
Regarding the $n=11$ case, a recent conference abstract reports that the inequality holds on a small cone in the first orthant~\citep{ZengEtAlCM2026Vasc}; however, this remains a partial result and leaves the full positive-real space unresolved. 
Both dimensions constitute the natural next frontiers for SDS-style reductions, yet direct scaling of lower-order patterns encounters a severe combinatorial explosion, with the initial ordering split requiring $8! = 40,320$ regions for $n=9$ and escalating to $10!$ regions for $n=11$. 
To address this substantial computational barrier, the primary contribution of this paper is the delivery of a finite exact certificate that formally proves the full positive-real case for $n=9$, thus successfully resolving the first open entry shown in the third block of Table~\ref{tab:vasc-status}.


\subsection{Statement}
\label{subsec:statement}

We now specialize the convention above to the case proved in this paper.
All indices below are interpreted cyclically modulo \(9\), so \(x_{9+i}=x_i\).

\begin{theorem}\label{thm:main}
Let \(x_1,\ldots,x_9\) be positive real numbers. Then
\[
\sum_{i=1}^9
\frac{x_i-x_{i+1}}{x_{i+1}+x_{i+2}}\ge 0.
\]
This theorem is the positive-real \(n=9\) Vasc inequality.
\end{theorem}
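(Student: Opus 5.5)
The plan is to turn the rational inequality into a finite family of polynomial positivity checks with exact rational coefficients. Each check will then be discharged by a coefficient-sign argument: directly, after a Pólya multiplier, or after a small AM--GM rewriting.

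\textbf{Step 1: clear denominators.} Multiplying by the positive product \(D(x)=\prod_{i=1}^9 (x_{i+1}+x_{i+2})\) turns Theorem~\ref{thm:main} into \(P(x)\ge0\) on the positive orthant, where
\[
P(x)=\sum_{i=1}^9 (x_i-x_{i+1})\prod_{j\ne i}(x_{j+1}+x_{j+2}).
\]
This \(P\) is a homogeneous integer polynomial of degree \(9\). It is invariant under the cyclic shift \(x_i\mapsto x_{i+1}\), but not under reflection. By continuity it suffices to prove \(P\ge0\) on the closed orthant. Since \(P\) vanishes on the diagonal \(x_1=\cdots=x_9\), any certificate must allow zeros there.

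\textbf{Step 2: reduce to ordering cones.} Using cyclic invariance, I fix a maximal coordinate, say \(x_9=\max_i x_i\). The remaining eight coordinates are then ordered by a permutation \(\sigma\in S_8\), giving the \(8!\) cones
\[
x_9\ge x_{\sigma(1)}\ge\cdots\ge x_{\sigma(8)}\ge0.
\]
On each cone I substitute cumulative gaps: set \(x_{\sigma(8)}=t_8\), \(x_{\sigma(k)}=t_8+t_7+\cdots+t_k\), and \(x_9=t_8+\cdots+t_0\), with all \(t_k\ge0\). The cone becomes the nonnegative orthant in \(t\), and \(P\) becomes a degree-\(9\) form \(P_\sigma(t)\). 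It is enough to certify \(P_\sigma\ge0\) on \(\mathbb{R}_{\ge0}^9\) for every \(\sigma\). The diagonal becomes the ray where only \(t_8\ne0\), and every other gap vanishes there.

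\textbf{Step 3: certify each cone, cheapest method first.}
\begin{enumerate}
\item I expand \(P_\sigma\) exactly. If all coefficients are nonnegative, the cone is closed. I expect this to settle the large majority of cones.
\item Otherwise I try Pólya: multiply by \((t_0+\cdots+t_8)^N\) for increasing \(N\) until all coefficients are nonnegative. Pure Pólya fails when \(P_\sigma\) has zeros on the boundary of the simplex, such as the diagonal ray. So I first check that the monomials surviving at \(t_8\)-only vanish appropriately. Where they do not, I apply Pólya only in the variables other than \(t_8\), treating \(P_\sigma\) as a polynomial in \(t_8\) with coefficients in the other gaps.
\item For the residual cones, where a few negative monomials sit strictly between positive ones, I use AM--GM. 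Each negative term \(c\,t^{\gamma}\) is dominated by \(a\,t^{\alpha}+b\,t^{\beta}\) with \(\gamma=(\alpha+\beta)/2\) and \(c^2\le 4ab\). Such triples are found by a small linear or greedy search over midpoints. If that still fails, I subdivide the cone further, for example by splitting \(t_k\le t_l\) versus \(t_k\ge t_l\), and recurse.
\end{enumerate}

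\textbf{Main obstacle.} The hard part is Step~3 on the cones near the equality locus and near the known \(n=6\) and \(n=13\) counterexample configurations. Configurations with zero coordinates, padded by the \(n\mapsto n+2\) identity, indicate where \(P\) is nearly tight for \(n=9\). There the Pólya degree may blow up or fail outright, and I expect AM--GM midpoint overlays and recursive subdivision to be necessary. The other burden is scale: \(40320\) cones, each with thousands of monomials. This must be organized as an exact-arithmetic certificate listing, per leaf, the cone, the substitution, the multiplier or AM--GM triples, and the resulting coefficient list, so that an independent verifier can recheck it with integer arithmetic alone. The human-readable argument then consists only of Steps~1--2 together with the soundness of each leaf type.
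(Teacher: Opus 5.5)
Your plan is essentially the paper's proof. You clear the positive denominators, use cyclic invariance to fix a maximal coordinate, split into the $8!$ sorted cones with cumulative gap coordinates, and close every cone with a coefficient-nonnegativity leaf, a plain Polya multiplier $L^k$ leaf, or an AM-GM midpoint-circuit leaf, all resting on an exactly verified finite certificate; the paper supplies this as Proposition~\ref{prop:certificate}, with counts $36815$, $2236$, $1269$ and no unresolved cones. Your fallbacks (Polya in a subset of the variables, further subdivision) turn out to be unnecessary. Your worry that the diagonal zero blocks plain Polya is too pessimistic: that zero sits at a vertex of the simplex, and most cones are coefficient-nonnegative outright.
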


\subsection{Proof Strategy}
\label{subsec:proof-strategy}

The primary results of this paper were obtained using the AI agent \textit{MechMath Agent Team}~\footnote{\url{https://eonmath.github.io/mechmath}} under human guidance. 
\textit{MechMath Agent Team} is a large language model (LLM) driven agent designed to prove mathematical theorems expressed in natural language by planning algebraic reductions, invoking external tools, and iteratively verifying intermediate claims. 
In this study, the certificate verification phase was autonomously executed by \textit{MechMath Agent Team} via Python tool calls. 
Specifically, the agent authored the verification programs, generated the fixed-maximum and sorted-cone case splits, and organized the terminal cases into coefficient, Polya-multiplier, and AM-GM midpoint-circuit leaves. 
Then it repeatedly ran exact integer arithmetic checks until a terminal certificate was successfully verified for each of the 40,320 sorted cones. 
The human authors guided the overarching problem formulation and conducted final audits of the mathematical reductions, verification code, certificate formats, and reported verification results.

Our proof decouples human-verifiable algebraic reductions from the finite certificate check. 
Although certificate-based, it represents neither a raw computational transcript nor a complete proof-assistant formalization of the entire paper. 
Rather, the finite certificate is invoked at a single, well-defined juncture: immediately following the reduction to a fixed-maximum cone and the subsequent sorting of the remaining eight variables. 
For each of the 40,320 possible sorted cones, the certificate supplies a unique terminal row, which is then verified via three elementary soundness mechanisms: coefficientwise nonnegativity, Polya multiplication followed by division by a positive factor, and AM-GM midpoint circuit overlays.

\textbf{Proof sketch.}
The proof is structured into five distinct stages.
First, we clear the positive denominators to reduce the problem to analyzing a homogeneous polynomial $P_9$. 
Second, leveraging cyclic invariance, we restrict the verification of $P_9 \ge 0$ to tuples where the first coordinate is a maximum. 
Third, within each fixed-maximum cone, we sort the remaining variables and introduce cumulative gaps, which maps variable ties to zero gaps and transforms the domain into a nonnegative orthant with a positive total gap. 
Fourth, the finite certificate yields a terminal nonnegativity witness for every sorted cone. 
Fifth, the soundness of the three terminal leaf types establishes $P_9 \ge 0$, from which the original inequality is successfully recovered by reversing the denominator clearing.

\section{Reductions}
\label{sec:reductions}

This second part turns the original rational inequality into finitely many
orthant problems. We first clear positive denominators, then use cyclic
symmetry to fix a maximum coordinate, and finally sort the remaining
coordinates into \(8!\) cones, each parametrized by cumulative gaps.

\subsection{Cleared Polynomial}
\label{subsec:cleared-polynomial}

The first bridge is denominator clearing. Since the variables are strictly
positive, this step preserves the sign of the inequality.
For a positive 9-tuple \(x=(x_1,\ldots,x_9)\), put
\[
C_9(x)=\sum_{i=1}^9
\frac{x_i-x_{i+1}}{x_{i+1}+x_{i+2}}.
\]
Define
\[
D_9(x)=\prod_{i=1}^9 (x_{i+1}+x_{i+2})
\]
and
\[
P_9(x)=
\sum_{i=1}^9 (x_i-x_{i+1})
\prod_{\substack{1\le j\le 9\\ j\ne i}}
(x_{j+1}+x_{j+2}).
\]

\begin{lemma}[Denominator clearing]\label{lem:denom}
For every positive real 9-tuple \(x\),
\[
D_9(x)>0,\qquad P_9(x)=D_9(x)C_9(x).
\]
Consequently, \(P_9(x)\ge0\) implies \(C_9(x)\ge0\).
\end{lemma}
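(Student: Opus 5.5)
The plan is to verify the three claims of Lemma~\ref{lem:denom} in order, directly from the definitions of \(C_9\), \(D_9\), and \(P_9\).

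\textbf{Positivity of \(D_9\).} For a positive 9-tuple \(x\), each factor \(x_{i+1}+x_{i+2}\) is a sum of two positive reals, with indices read cyclically modulo \(9\). Hence every factor is strictly positive. A finite product of positive reals is positive, so \(D_9(x)>0\). In particular, every denominator in \(C_9(x)\) is nonzero, so \(C_9(x)\) is well defined.

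\textbf{The identity \(P_9=D_9C_9\).} I will distribute \(D_9(x)\) over the sum defining \(C_9(x)\). For each fixed \(i\), the product \(D_9(x)\) splits as
\[
(x_{i+1}+x_{i+2})\prod_{j\ne i}(x_{j+1}+x_{j+2}).
\]
Therefore
\[
D_9(x)\,\frac{x_i-x_{i+1}}{x_{i+1}+x_{i+2}}=(x_i-x_{i+1})\prod_{j\ne i}(x_{j+1}+x_{j+2}).
\]
This cancellation is legitimate because \(x_{i+1}+x_{i+2}\neq0\) by the first step. Summing over \(i=1,\ldots,9\) gives exactly the defining expression for \(P_9(x)\).

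The only point requiring care is the indexing. The factor removed from \(D_9\) in the \(i\)-th term must be the \(j=i\) factor. This holds because the map \(j\mapsto (j+1,j+2)\) modulo \(9\) is injective on \(\{1,\ldots,9\}\), so each index \(j\) labels exactly one factor of the product. No deeper obstacle arises.

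\textbf{The consequence.} Since \(D_9(x)>0\), we can write \(C_9(x)=P_9(x)/D_9(x)\). If \(P_9(x)\ge0\), then \(C_9(x)\) is a nonnegative number divided by a positive one, so \(C_9(x)\ge0\). In fact the two signs agree exactly, which is what later justifies recovering Theorem~\ref{thm:main} from the nonnegativity of \(P_9\) on the positive orthant.
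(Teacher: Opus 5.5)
Your proposal is correct and follows essentially the same route as the paper: positivity of each factor \(b_i=x_{i+1}+x_{i+2}\) gives \(D_9(x)>0\), the identity \(D_9(x)/b_i=\prod_{j\ne i}b_j\) turns \(D_9C_9\) into \(P_9\) term by term, and dividing by the positive \(D_9(x)\) preserves the sign. Your injectivity remark is true but not needed, because the splitting \(\prod_j b_j=b_i\prod_{j\ne i}b_j\) follows from the indexing of the product alone.
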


\begin{proof}
Let \(b_i=x_{i+1}+x_{i+2}\). Each \(b_i\) is positive, and therefore
\(D_9(x)=\prod_i b_i>0\). For each fixed \(i\),
\[
\frac{D_9(x)}{b_i}
=\prod_{\substack{1\le j\le 9\\ j\ne i}} b_j.
\]
Multiplying the finite sum defining \(C_9\) by \(D_9\) and substituting this
quotient identity gives
\[
D_9(x)C_9(x)
=\sum_{i=1}^9 (x_i-x_{i+1})
\prod_{\substack{1\le j\le 9\\ j\ne i}}
(x_{j+1}+x_{j+2})
=P_9(x).
\]
Since \(D_9(x)>0\), division by \(D_9(x)\) preserves the inequality direction.
\end{proof}

\subsection{Fixed-Maximum Reduction}
\label{subsec:fixed-maximum}

For \(k\in\{0,\ldots,8\}\), let \(R_kx\) be the cyclic rotation
\[
(R_kx)_i=x_{i+k}.
\]

\begin{lemma}[Cyclic maximum reduction]\label{lem:cyclicmax}
Assume that \(P_9(z)\ge0\) holds for every positive 9-tuple \(z\) satisfying
\[
z_i\le z_1\qquad (i=1,\ldots,9).
\]
Then \(P_9(x)\ge0\) holds for every positive 9-tuple \(x\).
\end{lemma}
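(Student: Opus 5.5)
The plan is to show that \(P_9\) is invariant under every cyclic rotation \(R_k\), and then rotate an arbitrary positive tuple so that a maximal coordinate moves into the first position.

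\textbf{Step 1: invariance.} We claim \(P_9(R_kx)=P_9(x)\) for every positive 9-tuple \(x\) and every \(k\in\{0,\ldots,8\}\). It suffices to treat \(k=1\), since \(R_k=R_1^k\).

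Write \(y=R_1x\), so \(y_i=x_{i+1}\) with indices mod \(9\). Put \(b_j(x)=x_{j+1}+x_{j+2}\); then \(b_j(y)=b_{j+1}(x)\). The \(i\)-th summand of \(P_9(y)\) is
\[
(x_{i+1}-x_{i+2})\prod_{j\ne i} b_{j+1}(x).
\]
Reindexing the product with \(j'=j+1\), which runs over all residues except \(i+1\), turns it into the \((i+1)\)-th summand of \(P_9(x)\). As \(i\) runs over all residues mod \(9\), so does \(i+1\), and summing gives \(P_9(y)=P_9(x)\).

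An alternative route uses Lemma~\ref{lem:denom}: \(C_9\) and \(D_9\) are each visibly cyclically invariant as a cyclic sum and a cyclic product, so \(P_9=D_9C_9\) is too on positive tuples. This is cleaner and avoids reindexing products.

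\textbf{Step 2: choosing the rotation.} Given a positive \(x\), pick an index \(m\) with \(x_m=\max_i x_i\), which exists because the index set is finite. Set \(z=R_{m-1}x\), so \(z_i=x_{i+m-1}\). Then:
\begin{itemize}
\item \(z_1=x_m\);
\item \(z_i\le z_1\) for all \(i\), since every \(z_i\) is some coordinate of \(x\);
\item \(z\) is positive.
\end{itemize}
The hypothesis of the lemma then gives \(P_9(z)\ge0\), and Step 1 gives \(P_9(x)=P_9(z)\ge0\).

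\textbf{Main obstacle.} The only point needing care is the index bookkeeping in Step 1, namely that the cyclic reindexing of the excluded factor \(j\ne i\) shifts consistently to \(j'\ne i+1\). The alternative route through \(D_9C_9\) sidesteps this entirely.
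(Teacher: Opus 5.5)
Your proof is correct and follows essentially the same route as the paper: you establish $P_9(R_kx)=P_9(x)$ by cyclically reindexing the defining sum and products, then rotate a maximal coordinate into the first slot with $k=m-1$ and invoke the hypothesis. Your reduction to $k=1$ via $R_k=R_1^k$ and the alternative shortcut through $P_9=D_9C_9$ on positive tuples are harmless variations on the paper's direct reindexing.
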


\begin{proof}
The map \(i\mapsto i+k\) is a bijection of the cyclic index set. Reindexing
the defining finite sum and product for \(P_9\) gives
\[
P_9(R_kx)=P_9(x)
\]
for every \(k\). Now fix a positive tuple \(x\). Choose an index \(m\) for
which \(x_m\) is a maximum among the nine coordinates, and put \(k=m-1\). Then
\(z=R_kx\) is positive and satisfies \(z_i\le z_1\) for all \(i\). By the
fixed-maximum hypothesis, \(P_9(z)\ge0\). The cyclic invariance gives
\(P_9(x)=P_9(z)\), so \(P_9(x)\ge0\).
\end{proof}

\subsection{Sorted Cones and SDS Gap Coordinates}
\label{subsec:sorted-gaps}

After the maximum coordinate is fixed, the remaining variables are split
according to their weak order. The following coordinates make ties harmless:
they are represented by zero gaps.
Let \(\sigma=[p_1,\ldots,p_8]\) be a permutation of \(\{1,\ldots,8\}\). Set
\[
q_1=1,\qquad q_{m+1}=p_m+1\quad (m=1,\ldots,8),
\]
and define the fixed-maximum sorted cone
\[
K_{\sigma}=
\{z\in\R_{>0}^9:
z_1\ge z_{p_1+1}\ge\cdots\ge z_{p_8+1}>0\}.
\]
On this cone we use cumulative gap coordinates
\[
z_{q_m}=y_m+y_{m+1}+\cdots+y_9,\qquad m=1,\ldots,9,
\]
where
\[
y_1,\ldots,y_8\ge0,\qquad y_9>0.
\]
Let
\[
R_{\sigma}(y)=P_9(z(y))
\]
denote the pullback of \(P_9\) under this substitution.

\begin{lemma}[Sorted-gap bridge]\label{lem:gaps}
For each \(\sigma\), the above cumulative-gap map parametrizes \(K_{\sigma}\)
by the domain \(y_1,\ldots,y_8\ge0,\ y_9>0\). If \(z\in K_{\sigma}\), then the
associated gap vector satisfies
\[
y_1+\cdots+y_9=z_1>0
\]
and
\[
R_{\sigma}(y)=P_9(z).
\]
Weak ties in the sorted coordinates are represented by zero gaps.
\end{lemma}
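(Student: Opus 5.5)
The plan is to treat the cumulative-gap map as an explicit linear bijection and check its two directions. Write $w_m = z_{q_m}$ for $m=1,\ldots,9$. The indices $q_1=1,q_2=p_1+1,\ldots,q_9=p_8+1$ run over $\{1,\ldots,9\}$ exactly once, since $\sigma$ is a permutation of $\{1,\ldots,8\}$ and shifting by one gives a permutation of $\{2,\ldots,9\}$. Hence $z\mapsto w$ is just a reindexing of coordinates and is bijective on $\R^9$. In the new indexing the cone reads
\[
K_\sigma=\{w\in\R^9_{>0}: w_1\ge w_2\ge\cdots\ge w_9>0\},
\]
and the substitution is $w_m=\sum_{k\ge m} y_k$. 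This map is given by an upper-triangular matrix of ones, so it is invertible with inverse
\[
y_m=w_m-w_{m+1}\quad (m\le 8),\qquad y_9=w_9 .
\]

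For the forward direction, take $y_1,\ldots,y_8\ge0$ and $y_9>0$. Then every $w_m\ge y_9>0$, so $z$ is positive. Moreover $w_m-w_{m+1}=y_m\ge0$, so the chain of weak inequalities holds and $z\in K_\sigma$.

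For the reverse direction, take $z\in K_\sigma$ and define $y$ by the inverse formulas. The defining chain of $K_\sigma$ gives $y_m\ge0$ for $m\le8$, and $z_{p_8+1}>0$ gives $y_9>0$. Substituting back recovers $w$ by telescoping, so the two maps are mutually inverse between $K_\sigma$ and the domain $\{y_1,\ldots,y_8\ge0,\ y_9>0\}$. This establishes the parametrization.

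The remaining claims then follow directly:
\begin{itemize}
\item Telescoping also gives $\sum_{k=1}^9 y_k=w_1=z_{q_1}=z_1$, and this is positive because $z$ is positive.
\item $R_\sigma(y)=P_9(z(y))=P_9(z)$ holds by the definition of the pullback, since $z(y)$ is the point we started from.
\item A weak tie $z_{q_m}=z_{q_{m+1}}$ corresponds exactly to $y_m=0$. So ties lie in the domain, namely on its boundary faces, and are not excluded.
\end{itemize}

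There is no real obstacle here. The only point needing care is the bookkeeping that $(q_1,\ldots,q_9)$ is a permutation of $\{1,\ldots,9\}$ with $q_1=1$. This is what makes the reindexing legitimate and what identifies the total gap with $z_1$ rather than with some other coordinate.
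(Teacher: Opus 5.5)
Your proposal is correct and follows the paper's proof essentially step for step. The forward direction uses monotone cumulative sums, with $y_9>0$ forcing positivity. The inverse uses consecutive differences and telescoping, which gives the total gap $z_{q_1}=z_1$, the identity $R_\sigma(y)=P_9(z)$ by definition, and zero gaps exactly at ties. Your explicit check that $(q_1,\ldots,q_9)$ is a permutation of $\{1,\ldots,9\}$ with $q_1=1$ spells out a point the paper uses only tacitly.
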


\begin{proof}
If \(y_1,\ldots,y_8\ge0\) and \(y_9>0\), the cumulative sums satisfy
\[
y_m+\cdots+y_9\ge y_{m+1}+\cdots+y_9
\]
for \(m=1,\ldots,8\), and every coordinate is positive. Hence the constructed
\(z\) lies in \(K_{\sigma}\). Conversely, if \(z\in K_{\sigma}\), define
\[
y_m=z_{q_m}-z_{q_{m+1}}\quad (m=1,\ldots,8),\qquad y_9=z_{q_9}.
\]
The weak inequalities give \(y_m\ge0\) for \(m\le8\), and the last strict
inequality gives \(y_9>0\). Telescoping gives
\[
z_{q_m}=y_m+\cdots+y_9
\]
for every \(m\). Thus the two constructions are inverse, the sum of all gaps
is \(z_{q_1}=z_1>0\), and \(R_{\sigma}(y)=P_9(z)\) by definition. Adjacent
equalities in the sorted chain are exactly the zero adjacent gaps.
\end{proof}

\section{Certificate Verification}
\label{sec:certificate-verification}
This final stage represents the sole juncture at which the finite certificate is integrated into the proof. 
The certificate maps a single terminal row to each sorted cone, while the subsequent lemmas establish the mathematical soundness of the three permissible terminal row types.
The underlying trust model is deliberately narrow in scope. 
Specifically, this implementation does not constitute a formalization within an interactive proof assistant such as Rcoq, Lean, or Isabelle. 
Furthermore, producer programs are excluded from the core proof, serving instead as a reproducibility pipeline for regenerating the certificate. 
The proof incorporates the certificate only after it has been validated by an independent, minimal verifier. 
This verifier relies exclusively on the Python standard library, ensuring complete isolation from any producer modules. 
Using exact integer arithmetic, it recomputes the fixed-maximum stream, pullbacks, polynomial hashes, Polya products, and AM-GM residual checks directly from the stored certificate data.



\subsection{The Fixed-Maximum Certificate}
\label{subsec:fixed-maximum-certificate}

The only computational input is the following finite certificate. It concerns
the lexicographic stream of all permutations of \(\{1,\ldots,8\}\), indexed by
the full half-open interval \([0,40320)\). The independent verifier checks
that this stream is bound to the displayed polynomial \(P_9\) and to the
fixed-maximum substitution convention used above.  The certificate data and
verifier are distributed as the standalone artifact repository
\begin{center}
\url{https://github.com/EonMath/Vasc-n9-cert/releases/tag/v1.0.0}.
\end{center}
Appendix~\ref{app:certificate-artifact} gives the verification and
reproduction details.

\begin{proposition}[Full fixed-maximum certificate]\label{prop:certificate}
For each of the \(40320\) roots in the fixed-maximum stream, the certificate
contains exactly one final row. There are no missing roots, no duplicate roots,
no deferred rows, and no unresolved final rows. Each final row has exactly one
of the following three terminal types:
\begin{enumerate}
\item coefficient leaf;
\item ordinary Polya multiplier leaf;
\item AM-GM midpoint overlay leaf.
\end{enumerate}
The terminal counts are
\[
36815,\qquad 2236,\qquad 1269,
\]
respectively, and these counts sum to \(40320\).
\end{proposition}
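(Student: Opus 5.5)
The statement is a claim about finite data, so I would prove it by exhaustive, exact, independent recomputation rather than by a conceptual argument. The plan has three parts: fix the root stream, bind the certificate rows to that stream, and tally the terminal types. Everything is checked with integer arithmetic.

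\textbf{Fixing the root stream.} First I regenerate, within the verifier itself, the lexicographic stream of all permutations \(\sigma=[p_1,\ldots,p_8]\) of \(\{1,\ldots,8\}\), indexed by \([0,40320)\). For each index I form the cumulative-gap substitution of Lemma~\ref{lem:gaps} and expand the pullback \(R_\sigma(y)=P_9(z(y))\) exactly as a homogeneous integer polynomial of degree \(9\) in \(y_1,\ldots,y_9\). I then hash its canonical sorted coefficient list. To confirm that the stream is bound to the displayed \(P_9\), I also expand \(P_9\) directly from its definition in Lemma~\ref{lem:denom} and compare its hash with the value stored in the certificate header. This prevents the certificate from being checked against a different polynomial or a different substitution convention.

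\textbf{Binding rows to roots.} Next I read every final row of the certificate and extract its root index together with its stored pullback hash. I check three things:
\begin{enumerate}
\item each index lies in \([0,40320)\);
\item the map from rows to indices is injective (no duplicates) and surjective (no missing roots), which I verify by sorting the indices and comparing against \(0,\ldots,40319\);
\item each stored hash equals the recomputed hash of \(R_\sigma\).
\end{enumerate}
I also confirm that every row carries a terminal tag drawn from the three allowed types and that no row has a deferred or unresolved status field.

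\textbf{Validating the leaves.} For each row I then run the check matching its tag:
\begin{itemize}
\item \emph{Coefficient leaf:} every coefficient of \(R_\sigma\) is nonnegative.
\item \emph{Ordinary P\'olya multiplier leaf:} for the stored exponent \(N\), every coefficient of \((y_1+\cdots+y_9)^N R_\sigma\) is nonnegative. This uses the fact that the total gap is positive by Lemma~\ref{lem:gaps}.
\item \emph{AM-GM midpoint overlay leaf:} the stored circuits are verified. Each negative monomial must be the exact midpoint of its two supporting monomials, the weights must satisfy the AM-GM inequality \(c_1c_2\ge \tfrac14 c^2\), and the residual polynomial after subtracting the overlays must be coefficientwise nonnegative.
\end{itemize}
Finally I count the rows by type and confirm the totals \(36815\), \(2236\) and \(1269\), which sum to \(40320\).

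\textbf{Main obstacle.} I expect the cost to be dominated by the computation, not by the logic. There are \(40320\) expansions of degree-\(9\) polynomials in nine variables, together with P\'olya powers that can greatly inflate the number of terms. To make this feasible I would use dictionary-based sparse multiplication keyed by exponent tuples, and I would cache the substitution of the binomial factors \(x_{i+1}+x_{i+2}\) in gap coordinates. A second, subtler point is independence: the verifier must import nothing from the producer code. Otherwise the check would only confirm that the producer agrees with itself, rather than establishing the proposition.
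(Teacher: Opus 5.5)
Your plan is correct and follows the paper's route. The paper's proof is an independent, producer-free verifier that regenerates the lexicographic permutation stream over $[0,40320)$, binds it to the displayed $P_9$, recomputes each pullback $R_\sigma$ and its hash, checks full coverage with no duplicate, deferred or unresolved rows, validates the coefficient, Polya and AM-GM midpoint leaves in exact integer arithmetic, and tallies $36815+2236+1269=40320$; your unequal-weight test $4c_1c_2\ge c^2$ generalizes the paper's equal-weight circuits $m y^\alpha+m y^\gamma-2m y^\beta$ harmlessly, but it is sound only if you also require $c_1,c_2>0$ (the paper checks $m_\nu>0$), and in overlay rows you must compute the residual as $L^kR_\sigma-\sum_\nu C_\nu$ with the stored exponent $k$, not from $R_\sigma$ alone.
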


This proposition is the finite machine-checked input to the proof.  The
surrounding lemmas below are ordinary mathematical arguments showing that any
row passing one of the three terminal checks implies nonnegativity on its
corresponding sorted cone.

The terminal labels have the following mathematical meanings. Let
\[
L(y)=y_1+\cdots+y_9.
\]
Here \(\N=\{0,1,2,\ldots\}\). A coefficient leaf asserts that
\(R_{\sigma}\) is coefficientwise nonnegative. An ordinary Polya multiplier
leaf asserts that, for some \(k\in\N\), the polynomial
\[
Q=L^kR_{\sigma}
\]
is coefficientwise nonnegative. An AM-GM midpoint overlay leaf asserts that,
for some \(k\in\N\),
\[
Q=L^kR_{\sigma}=S+C_1+\cdots+C_N,
\]
where \(S\) is coefficientwise nonnegative and each circuit \(C_\nu\) has the
form
\[
C_\nu(y)
=m_\nu y^{\alpha_\nu}+m_\nu y^{\gamma_\nu}
-2m_\nu y^{\beta_\nu},
\]
with \(m_\nu>0\), exponent vectors
\(\alpha_\nu,\beta_\nu,\gamma_\nu\in\N^9\), and
\[
\alpha_\nu+\gamma_\nu=2\beta_\nu
\]
coordinatewise. In the overlay case, the verifier checks matching pullback
hashes, nonnegative residual data, positive circuit amounts, and the midpoint
exponent identities for the circuits used in the final row.

\subsection{Leaf Soundness}
\label{subsec:leaf-soundness}

The certificate has three terminal labels. Coefficient and Polya leaves are
immediate from coefficientwise nonnegativity after division by a positive
power of \(L\); the only extra ingredient for overlay leaves is the following
AM-GM midpoint circuit.

\begin{lemma}[Midpoint circuit soundness]\label{lem:circuit}
Let \(m>0\), let \(\alpha,\beta,\gamma\in\N^9\), and suppose
\(\alpha+\gamma=2\beta\) coordinatewise. Then, for every
\(y\in\R_{\ge0}^9\),
\[
m y^\alpha+m y^\gamma-2m y^\beta\ge0.
\]
\end{lemma}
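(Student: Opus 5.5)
The plan is to reduce the claim to the two-term AM-GM inequality \(a^2+c^2\ge 2ac\), that is, to \((a-c)^2\ge0\). Since \(m>0\), it suffices to prove \(y^\alpha+y^\gamma\ge 2y^\beta\) for every \(y\in\R_{\ge0}^9\). Throughout we use the convention \(0^0=1\), which is the one implicit in reading \(y^\alpha\) as a monomial.

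Fix \(y\in\R_{\ge0}^9\) and put \(a=y^{\alpha}\), \(c=y^{\gamma}\), \(b=y^{\beta}\). The key observation is that \(b^2=ac\). Coordinatewise, \(y_j^{2\beta_j}=y_j^{\alpha_j}y_j^{\gamma_j}\), because \(2\beta_j=\alpha_j+\gamma_j\). This is an identity of monomials in \(y_j\), so it holds even when \(y_j=0\), including the case where some exponents are zero. Taking the product over \(j\) gives \(y^{2\beta}=y^{\alpha}y^{\gamma}\). Moreover \(a,b,c\ge0\), since they are products of nonnegative numbers.

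Now write \(a=s^2\) and \(c=t^2\) with \(s=\sqrt a\ge0\) and \(t=\sqrt c\ge0\). Then \(b^2=s^2t^2\), and because \(b\ge0\) this forces \(b=st\). Hence
\[
y^\alpha+y^\gamma-2y^\beta=s^2+t^2-2st=(s-t)^2\ge0.
\]
Multiplying by \(m>0\) gives the claim.

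No step is a genuine obstacle. The only point needing care is the boundary of the orthant, where some \(y_j=0\). There one must not take logarithms or divide, which is why the argument goes through the exact monomial identity \(b^2=ac\) and the nonnegativity of \(b\), rather than through a weighted AM-GM written in terms of \(\log y\).
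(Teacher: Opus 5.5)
Your proof is correct, but the square comes from a different place than in the paper.

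\textbf{The paper's route.} It decomposes the exponents coordinatewise into natural-number vectors \(p,a,b\) with \(\alpha=p+2a\), \(\gamma=p+2b\), \(\beta=p+a+b\). This uses the parity fact that \(\gamma_i-\alpha_i=2(\beta_i-\alpha_i)\) is even, together with a case split on whether \(\alpha_i\le\gamma_i\). The result is the polynomial identity
\[
m y^\alpha+m y^\gamma-2m y^\beta=m\,y^p\,(y^a-y^b)^2 .
\]

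\textbf{Your route.} You argue pointwise. From \(b^2=ac\) and \(b\ge0\) you get \(b=\sqrt{a}\,\sqrt{c}\), and hence a square \((\sqrt a-\sqrt c)^2\) of real numbers.

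\textbf{Relation between the two.} The squares are the same object, since \(\sqrt{y^\alpha}=\sqrt{y^p}\,y^a\) and \(\sqrt{y^\gamma}=\sqrt{y^p}\,y^b\).

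\textbf{What each buys.} Your version avoids the parity argument and the case split entirely. It would also work unchanged for nonnegative real exponents. The paper's version instead gives an explicit algebraic identity (a monomial times the square of a binomial) rather than only a pointwise inequality. That is the more natural form in a setting where everything else is certified by exact polynomial arithmetic.
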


\begin{proof}
For each coordinate choose nonnegative integer vectors \(p,a,b\) such that
\[
\alpha=p+2a,\qquad \gamma=p+2b,\qquad \beta=p+a+b.
\]
Indeed, in a coordinate with \(\alpha_i\le\gamma_i\), set
\[
p_i=\alpha_i,\qquad a_i=0,\qquad b_i=\frac{\gamma_i-\alpha_i}{2};
\]
the midpoint identity gives \(\gamma_i-\alpha_i=2(\beta_i-\alpha_i)\), so
\(b_i\in\N\). The case \(\gamma_i<\alpha_i\) is symmetric, with
\[
p_i=\gamma_i,\qquad a_i=\frac{\alpha_i-\gamma_i}{2},\qquad b_i=0.
\]
With
\[
A=y^a,\qquad B=y^b,\qquad M=y^p,
\]
the usual laws of natural powers on the nonnegative orthant, with \(0^0=1\),
give
\[
y^\alpha=MA^2,\qquad y^\gamma=MB^2,\qquad y^\beta=MAB.
\]
Therefore
\[
m y^\alpha+m y^\gamma-2m y^\beta
=mM(A-B)^2\ge0,
\]
because \(m>0\), \(M\ge0\), and squares are nonnegative.
\end{proof}

\begin{lemma}[Terminal leaf soundness]\label{lem:leafsound}
Let \(y_1,\ldots,y_9\ge0\) and \(L(y)>0\). For a fixed \(\sigma\), any one of
the three terminal certificates in Proposition~\ref{prop:certificate} implies
\[
R_{\sigma}(y)\ge0.
\]
\end{lemma}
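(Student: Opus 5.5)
The argument is a case split on the three terminal types of Proposition~\ref{prop:certificate}. All three rest on one observation: a polynomial with nonnegative coefficients is nonnegative on $\R_{\ge0}^9$. Each of its monomials $c_\beta y^\beta$ has $c_\beta\ge0$ and $y^\beta\ge0$, using the convention $0^0=1$ as in Lemma~\ref{lem:circuit}. I record this first and then treat each leaf type in turn, always with $y_1,\ldots,y_9\ge0$ and $L(y)>0$.

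\textbf{Coefficient leaf.} Here $R_\sigma$ itself has nonnegative coefficients, so the observation gives $R_\sigma(y)\ge0$ directly.

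\textbf{Ordinary Polya multiplier leaf.} For some $k\in\N$, the polynomial $Q=L^kR_\sigma$ has nonnegative coefficients, so $Q(y)\ge0$. Since $L(y)>0$, we have $L(y)^k>0$, and therefore
\[
R_\sigma(y)=Q(y)/L(y)^k\ge0.
\]
This evaluation is legitimate because $Q=L^kR_\sigma$ is an identity in the polynomial ring, so it holds pointwise at $y$.

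\textbf{AM-GM midpoint overlay leaf.} We have the polynomial identity
\[
L^kR_\sigma=S+C_1+\cdots+C_N .
\]
Evaluating at $y$, the observation gives $S(y)\ge0$. Each circuit has $m_\nu>0$ and $\alpha_\nu+\gamma_\nu=2\beta_\nu$, so Lemma~\ref{lem:circuit} gives $C_\nu(y)\ge0$ for every $\nu$. Hence $L(y)^kR_\sigma(y)\ge0$, and dividing by $L(y)^k>0$ yields $R_\sigma(y)\ge0$.

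The only point that is not routine is interpretive: the identities asserted by the certificate must be exact polynomial identities over $\mathbb{Q}$. The verifier's exact integer arithmetic provides this. Given exact identities, evaluating them at the real point $y$ is valid, and the argument above applies unchanged.
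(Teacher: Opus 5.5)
Your proposal is correct and follows essentially the same three-case argument as the paper. Coefficient leaves follow from termwise nonnegativity, Polya leaves from dividing \(Q(y)\ge0\) by \(L(y)^k>0\), and overlay leaves from \(S(y)\ge0\) together with Lemma~\ref{lem:circuit}, followed by the same division. Your remark that the certificate identities must hold exactly in the polynomial ring is a reasonable explicit statement of what the paper leaves implicit in its exact-arithmetic verifier.
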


\begin{proof}
If the row is a coefficient leaf, then
\[
R_{\sigma}=\sum_a c_a y^a
\]
with finitely many terms and all \(c_a\ge0\). Since every monomial \(y^a\) is
nonnegative on \(\R_{\ge0}^9\), \(R_{\sigma}(y)\ge0\).

If the row is an ordinary Polya multiplier leaf, then
\[
Q=L^kR_{\sigma}
\]
for some \(k\in\N\), and \(Q\) is coefficientwise nonnegative. Hence
\(Q(y)\ge0\). Since \(L(y)>0\), also \(L(y)^k>0\), including \(k=0\). Thus
\[
R_{\sigma}(y)=\frac{Q(y)}{L(y)^k}\ge0.
\]

If the row is an AM-GM midpoint overlay leaf, then
\[
Q=L^kR_{\sigma}=S+C_1+\cdots+C_N,
\]
where \(S\) is coefficientwise nonnegative and each \(C_\nu\) satisfies the
hypotheses of Lemma~\ref{lem:circuit}. Therefore \(S(y)\ge0\) and
\(C_\nu(y)\ge0\) for every \(\nu\), so \(Q(y)\ge0\). The same positive
division by \(L(y)^k\) gives \(R_{\sigma}(y)\ge0\).
\end{proof}

\subsection{Proof of Theorem~\ref{thm:main}}
\label{subsec:proof-main}

We now assemble the reductions and the terminal certificate soundness.

\begin{proof}
By Lemmas~\ref{lem:denom} and~\ref{lem:cyclicmax}, it is enough to prove
\[
P_9(z)\ge0
\]
for every positive tuple \(z\) with \(z_i\le z_1\) for all \(i\).

Fix such a tuple \(z\). Sort \(z_2,\ldots,z_9\) in weakly decreasing order.
Thus there is a permutation \(\sigma=[p_1,\ldots,p_8]\) of
\(\{1,\ldots,8\}\) such that
\[
z_{p_1+1}\ge z_{p_2+1}\ge\cdots\ge z_{p_8+1}.
\]
If ties occur, choose the lexicographically least sorting permutation. This
tie rule only selects one compatible row of the finite certificate and imposes
no strict inequality. Since \(z_1\) is a maximum and all coordinates are
positive,
\[
z_1\ge z_{p_1+1}\ge\cdots\ge z_{p_8+1}>0,
\]
so \(z\in K_{\sigma}\).

Lemma~\ref{lem:gaps} supplies a gap vector \(y\) with
\[
y_1,\ldots,y_8\ge0,\qquad y_9>0,\qquad L(y)=y_1+\cdots+y_9>0,
\]
and
\[
R_{\sigma}(y)=P_9(z).
\]
The selected permutation has a unique index in the full lexicographic stream
\([0,40320)\). By Proposition~\ref{prop:certificate}, its final row has one
of the three terminal types. Lemma~\ref{lem:leafsound} applies in each case
and gives
\[
R_{\sigma}(y)\ge0.
\]
Using \(R_{\sigma}(y)=P_9(z)\), we obtain \(P_9(z)\ge0\).

The fixed-maximum tuple \(z\) was arbitrary. Lemma~\ref{lem:cyclicmax}
therefore gives \(P_9(x)\ge0\) for every positive real 9-tuple \(x\). Applying
Lemma~\ref{lem:denom} to the same tuple gives
\[
C_9(x)=\frac{P_9(x)}{D_9(x)}\ge0,
\]
because \(D_9(x)>0\). Since \(x\) was arbitrary, the positive-real \(n=9\)
Vasc inequality follows.
\end{proof}

\section{Limitations}
\label{subsec:limitations}

This paper proves only Theorem~\ref{thm:main}. It does not include a limiting
argument for zero-coordinate boundary tuples, and it does not address the
\(n=11\) case. The finite certificate is a full certificate only for the
fixed-maximum \(n=9\) stream after the sorted-gap substitution. It is connected
to the original rational inequality by the algebraic bridges proved in this
paper: denominator clearing, cyclic maximum reduction, sorted-gap
parametrization, and terminal leaf soundness.

\section*{Acknowledgements}

We thank Wensheng Yu and Guowei Dou of Beijing University of Posts and
Telecommunications for informing us that a proposed \(n=9\) counterexample to
the positive-real Vasc inequality does not in fact violate the inequality.
Their observation led us to reexamine the \(n=9\) case and motivated the
present certificate proof.  We also thank Zhenbing Zeng and coauthors for
their conference abstract on the \(n=9\) and \(n=11\) Vasc cases
\citep{ZengEtAlCM2026Vasc}, which helped draw attention to these cases.

\bibliography{ref}
\bibliographystyle{KLMM/klmm}
\clearpage

\appendix

\section{Certificate Artifact and Reproducibility Details}
\label{app:certificate-artifact}

The finite certificate used in Proposition~\ref{prop:certificate} is the
fixed-maximum \(n=9\) packet named
\[
\texttt{prefix\_0000000\_0040320\_with\_hardroots}.
\]
The independent verifier checks that its scope is exactly the full half-open
range \([0,40320)\) in the fixed-maximum stream, with one final row for every
root index, no gaps, no overlaps, no duplicate roots, no deferred rows, and no
unresolved final rows. The terminal status counts are \(36815\), \(2236\),
and \(1269\), as stated above.

The artifact repository is named \texttt{vasc-n9-cert}.  It contains:
\begin{itemize}
\item \texttt{certificates/}, containing the base Polya batches, the hardroot
overlay packets, and the final packet;
\item \texttt{tools/verify\_n9\_certificate\_minimal.py}, an independent
checker for the published certificate;
\item \texttt{tools/rebuild\_n9\_certificate\_from\_sources.py}, an optional
driver for regenerating the certificate files;
\item \texttt{producer\_sources/}, the producer source snapshot used by the
rebuild driver;
\item \texttt{rebuild/n9\_rebuild\_plan.json}, the frozen rebuild schedule;
\item \texttt{SHA256SUMS}, a checksum list for the packaged files;
\item \texttt{README.md} and \texttt{REPRODUCE.md}, which give reviewer-facing
verification and rebuild instructions.
\end{itemize}

There are two reproducibility layers.  The independent verification layer is
the proof-relevant one: it checks the deposited certificate using a small
checker independent of the producer programs.  The rebuild layer is optional:
it regenerates the same certificate files from the producer source snapshot and
then compares the regenerated files against \texttt{SHA256SUMS}.

\subsection{Independent Verification}
\label{app:independent-verification}

The recommended reviewer check is the independent verifier.  From the artifact
root, it is run by
\begin{center}
\texttt{python3 tools/verify\_n9\_certificate\_minimal.py}.
\end{center}
The checker uses only the Python standard library and does not import the
producer modules.  It recomputes the lexicographic permutation for each root,
the direct pullback of \(P_9\), the relevant polynomial hashes, the ordinary
Polya products, and the AM-GM midpoint residuals using exact integer
arithmetic.  A successful full run ends with \texttt{PASS} and the counts
\begin{center}
\begin{tabular}{rl}
\texttt{coefficient\_leaf\_count} & \(36815\),\\
\texttt{polya\_leaf\_count} & \(2236\),\\
\texttt{amgm\_midpoint\_overlay\_leaf\_count} & \(1269\),\\
\texttt{unresolved\_count} & \(0\).
\end{tabular}
\end{center}
The full run is single-threaded and long-running; on the preparation machine it
was estimated to take about 24--35 hours, and the largest AM-GM overlay roots
may use a few GB of memory.  The artifact also provides
two smoke checks:
\begin{center}
\texttt{python3 tools/verify\_n9\_certificate\_minimal.py --limit 32 --quiet}
\end{center}
and
\begin{center}
\texttt{python3 tools/verify\_n9\_certificate\_minimal.py --root-index 5952 --quiet}.
\end{center}
The first checks the first 32 roots, and the second checks an AM-GM overlay
root.
For archived verification runs, the verifier accepts
\texttt{--run-log <path>} and writes a structured JSON run record together with
a \texttt{.sha256} sidecar for that record.

File integrity can be checked independently by
\begin{center}
\texttt{sha256sum -c SHA256SUMS}.
\end{center}
This checksum command is not the mathematical verification; it only checks
that the packaged files match the archived artifact.

\subsection{From-Source Rebuild}
\label{app:from-source-rebuild}

The optional rebuild route regenerates the certificate files from the included
producer source snapshot.  The plan summary is printed by
\begin{center}
\texttt{python3 tools/rebuild\_n9\_certificate\_from\_sources.py --mode plan}.
\end{center}
A smoke rebuild in a fresh workspace is run by
\begin{center}
\begin{tabular}{l}
\texttt{python3 tools/rebuild\_n9\_certificate\_from\_sources.py --mode smoke}\\
\texttt{--build-workspace /tmp/vasc\_n9\_rebuild\_smoke}\\
\texttt{--run-log logs/rebuild\_smoke.json}.
\end{tabular}
\end{center}
The full rebuild regenerates all \(633\) raw Polya batches, all \(281\)
hardroot overlay packets, and the final packet, then compares the regenerated
certificate files against \texttt{SHA256SUMS}.  This gives a source-level
route for reproducing the deposited certificate, separate from the independent
mathematical verifier.
The rebuild driver records its subprocess commands, hash-comparison summaries,
and final status in the requested run log, again with a \texttt{.sha256}
sidecar.

\end{document}